\definecolor{linkC}{HTML}{710580}
\newcommand{\cmark}{\ding{51}}%
\newcommand{\xmark}{\ding{55}}%
\newcommand{\myrightarrow}[1]{\mathrel{\raisebox{-1pt}{$\xrightarrow{#1}$}}}
\newcommand{\LTL}{LTL}
\newcommand{\HyperLTL}{HyperLTL}
\newcommand{\ltln}{\LTLnext}
\newcommand{\ltlg}{\LTLsquare}
\newcommand{\ltlu}{\LTLuntil}
\newcommand{\nat}{\mathbb{N}}
\newcommand{\ldot}{\mathpunct{.}}
\newcommand{\lang}[1]{\mathcal{L}(#1)}
\newcommand{\ltlmodels}{\models}
\newcommand{\traceVars}{\mathcal{V}}
\newcommand{\traceSet}{\mathbb{T}}
\newcommand{\traces}[1]{\mathit{Traces}(#1)}
\newcommand{\ap}{\mathit{AP}}
\newcommand{\calT}{\mathcal{T}}
\newcommand{\calA}{\mathcal{A}}
\newcommand{\calL}{\mathcal{L}}
\newcommand{\zip}{\mathit{zip}}
\newcommand{\quant}{\mathbb{Q}}
\newcommand{\spot}{\texttt{spot}}
\newcommand{\bait}{\texttt{BAIT}}
\newcommand{\rabit}{\texttt{RABIT}}
\newcommand{\forklift}{\texttt{FORKLIFT}}
\newcommand{\AutoHyper}{\texttt{AutoHyper}}
\newif\iffullversion
\newcommand{\ifFull}[2]{\iffullversion#1\else#2\fi}
\begin{document}

\title{AutoHyper: Explicit-State Model \\Checking for HyperLTL}
\titlerunning{AutoHyper: Explicit-State Model Checking for HyperLTL}

\author{}
\institute{}

\author{
Raven Beutner
\and
Bernd Finkbeiner
}

\institute{CISPA Helmholtz Center for Information Security, \\Saabrücken, Germany}

\author{Raven Beutner\textsuperscript{(\Letter)} \!\!\! \scalebox{1.25}{\orcidlink{0000-0001-6234-5651}}  \and
	Bernd Finkbeiner \!\!\! \scalebox{1.25}{\orcidlink{0000-0002-4280-8441}} }

\authorrunning{R.~Beutner and B.~Finkbeiner}

\institute{CISPA Helmholtz Center for Information Security, Germany\\ \email{\{raven.beutner,finkbeiner\}@cispa.de}}

\maketitle

\begin{abstract}
HyperLTL is a temporal logic that can express hyperproperties, i.e., properties that relate multiple execution traces of a system.
Such properties are becoming increasingly important and naturally occur, e.g., in information-flow control, robustness, mutation testing, path planning, and causality checking.
Thus far, complete model checking tools for HyperLTL have  been limited to alternation-free formulas, i.e., formulas that use only universal or only existential trace quantification. 
Properties involving quantifier alternations could only be handled in an incomplete way, i.e., the verification might fail even though the property holds. 
In this paper, we present \texttt{AutoHyper}, an explicit-state automata-based model checker that supports full HyperLTL and is complete for properties  with arbitrary quantifier alternations.
We show that language inclusion checks can be integrated into HyperLTL verification, which allows \texttt{AutoHyper} to benefit from a range of existing inclusion-checking tools.  
We evaluate \texttt{AutoHyper} on a broad set of benchmarks drawn from different areas in the literature and compare it with existing (incomplete) methods for HyperLTL verification.
\end{abstract}

\section{Introduction}

Hyperproperties \cite{ClarksonS08} are system properties that relate multiple executions of a system.
Such properties are of increasing importance as they naturally occur, e.g., in information-flow control \cite{Rabe16}, robustness \cite{DArgenioBBFH17}, linearizability \cite{HerlihyW90,HsuSB21}, path planning \cite{0044NP20}, mutation testing \cite{FellnerBW21}, and causality checking \cite{CoenenFFHMS22}.
A prominent logic to express hyperproperties is \HyperLTL{}, which extends linear-time temporal logic (\LTL) with explicit trace quantification \cite{ClarksonFKMRS14}.
\HyperLTL{} can, for instance, express generalized non-interference (GNI) \cite{McCullough88}, stating that the high-security input of a system does not influence the observable output.
\begin{align}\label{prop:GNIintro}
	\forall \pi\ldot \forall \pi'\ldot \exists \pi''\ldot \ltlg\Big(\bigwedge_{a \in H} a_{\pi} \leftrightarrow a_{\pi''}\Big) \land \ltlg\Big(\bigwedge_{a \in L \cup O} a_{\pi'} \leftrightarrow a_{\pi''}\Big)\tag{\text{GNI}}
\end{align}
Here, $H$ is a set of high-security input, $L$ is a set of low-security inputs, and $O$ is a set of low-security outputs.
The formula states that for any traces $\pi, \pi'$ there exists a third trace $\pi''$ that agrees with the high-security inputs of $\pi$ and with the low-security inputs and outputs of $\pi'$.
Any observation made by a low-security attacker is thus compatible with every possible high-security input.

We are interested in the model checking (MC) problem of HyperLTL, i.e., whether a given (finite-state) system satisfies a given property. 
For HyperLTL, the structure of the quantifier prefix directly impacts the complexity of this problem.
For alternation-free formulas (i.e., formulas that only use quantifiers of a single type), verification is well understood and is reducible to the verification of a trace property on a self-composition of the system \cite{BartheDR11}.
This reduction has, for example, been implemented in \texttt{MCHyper} \cite{FinkbeinerRS15}, a tool that can model check (alternation-free) HyperLTL formulas in systems of considerable size (circuits with thousands of latches).

Verification is much more challenging for properties involving quantifier alternations (such as \ref{prop:GNIintro} from above).
While MC algorithms supporting full HyperLTL exist (see \cite{ClarksonFKMRS14,FinkbeinerRS15}), they have not been implemented yet. 
Instead, over the years, a number of approaches to the verification of such properties in practice have been made:
Finkbeiner et al.~\cite{FinkbeinerRS15} and D'Argenio et al.~\cite{DArgenioBBFH17} manually strengthen properties with quantifier alternation into properties that are alternation-free and can be checked by \texttt{MCHyper}.
Coenen et al.~\cite{CoenenFST19} instantiate existential quantification in a $\forall^*\exists^*$ property (i.e., a property involving an arbitrary number of universal quantifiers followed by an arbitrary number of existential quantifiers, such as \ref{prop:GNIintro})  with an explicit (user-provided) strategy, thus reducing to the verification of an alternation-free formula.  
Alternatively, the strategy that resolves existential quantification can be automatically synthesized \cite{BeutnerF22b}.
Hsu et al.~\cite{HsuSB21} present a bounded model checking (BMC) approach for HyperLTL that is implemented in \texttt{HyperQube}. 
See \Cref{sec:relatedWork} for more details.

While all these verification tools can verify (or refute) interesting properties, they all suffer from the same fundamental limitation: they are \emph{incomplete}.
That is, for all the tools above, we can come up with verification instances where they fail, not because of resource constraints but because of inherent limitations in the underlying verification algorithm.
Moreover, such instances are not rare events but are encountered regularly in practice. 
For example, many of the benchmarks used to evaluate \texttt{HyperQube} (by Hsu et al.~\cite{HsuSB21}) do not admit a strategy to resolve existential quantification. Conversely, many of the properties verified by Coenen et al.~\cite{CoenenFST19} (such as \ref{prop:GNIintro}) cannot be verified using BMC \cite{HsuSB21}. 

\paragraph{AutoHyper.}
 
In this paper, we present \AutoHyper, a model checker for HyperLTL.
Our tool checks a hyperproperty by iteratively eliminating trace quantification using automata-complementations, thereby reducing verification to the emptiness check of an automaton \cite{FinkbeinerRS15}.
Importantly -- and different from previous tools for HyperLTL verification such as \texttt{MCHyper} \cite{FinkbeinerRS15,CoenenFST19} and \texttt{HyperQube} \cite{HsuSB21} -- \AutoHyper{} can cope with (and is \emph{complete} for) arbitrary HyperLTL formulas. 
Model checking using \AutoHyper{} does not require manual effort (such as writing an explicit strategy in \texttt{MCHyper} \cite{CoenenFST19}), nor does a user need to worry if the given property can even be verified with a given method.
\AutoHyper{} thus provides a ``push-button'' model checking experience for \HyperLTL{}.\footnote{The name of \AutoHyper{} is derived from the fact that it is both \textbf{Auto}mata-based and \textbf{Auto}matic (i.e., it is complete and does not require any user intervention). }

To improve \AutoHyper's efficiency, we make the (theoretical) observation that we can often avoid explicit automaton complementation and instead reduce to a language inclusion check on Büchi automata (cf. \Cref{prop:li}). 
On the practical side, this enables \AutoHyper{} to resort to a range of mature language inclusion checkers, including \spot{} \cite{Duret-LutzRCRAS22}, \rabit{} \cite{ClementeM19}, \bait{} \cite{DoveriGPR21}, and \forklift{} \cite{DoveriGM22}.

\paragraph{Evaluation.}

Using \AutoHyper{}, we extensively study the practical aspects of model checking HyperLTL properties with quantifier alternations. 
To evaluate the performance of explicit-state model checking, we apply \AutoHyper{} to a broad range of benchmarks taken from the literature and compare it with existing (incomplete) tools. 
We make the surprising observation that -- at least on the currently available benchmarks -- explicit-state MC as implemented in \AutoHyper{} performs on-par (and frequently outperforms) symbolic methods such as BMC \cite{HsuSB21}.
Our benchmarks stem from various areas within computer science, so \AutoHyper{} should -- thanks to its ``push-button'' functionality, completeness, and ease of use -- be a valuable addition to many areas.

Apart from using \AutoHyper{} as a practical MC tool, we can also use it as a complete baseline to systematically evaluate existing (incomplete) methods. 
For example, while it is known that replacing existential quantification with a strategy (as done by Coenen et al.~\cite{CoenenFST19}) is incomplete, it was, thus far, unknown if this incompleteness occurs frequently or is merely a rare phenomenon. 
We use \AutoHyper{} to obtain a ground truth and evaluate the strategy-based verification approach in terms of its effectiveness (i.e., how many instances it can verify despite being incomplete) and efficiency.

\paragraph{Structure.}

The remainder of this paper is structured as follows.
In \Cref{sec:prelims}, we introduce HyperLTL.
We recap automaton-based verification (which we abbreviate ABV) and our new approach utilizing language inclusion checks in \Cref{sec:recap}.
We discuss alternative verification approaches for HyperLTL in \Cref{sec:relatedWork}.
In \Cref{sec:eval1}, we compare different backend solving techniques and study the complexity of HyperLTL MC with multiple quantifier alternations  in practice; In \Cref{sec:eval2}, we evaluate ABV on a set of benchmarks from the literature and compare with the bounded model checker \texttt{HyperQube} \cite{HsuSB21}; In \Cref{sec:eval3} we use \AutoHyper{} for a detailed analysis of (and comparison with) strategy-based verification \cite{CoenenFST19,BeutnerF22b}.

\section{Preliminaries}\label{sec:prelims}

We fix a set of atomic propositions $\ap$ and define $\Sigma := 2^\ap$.
HyperLTL \cite{ClarksonFKMRS14} extends  LTL with explicit quantification over traces, thereby lifting it from a logic expressing trace properties to one expressing hyperproperties \cite{ClarksonS08}. 
Let $\traceVars$ be a set of trace variables. 
We define HyperLTL formulas by the following grammar:
\begin{align*}
 	\psi &:= a_\pi \mid \neg \psi \mid \psi \land \psi \mid \ltln \psi \mid \psi \ltlu \psi\\
 	\varphi &:= \exists \pi \ldot \varphi \mid \forall \pi \ldot \varphi \mid  \psi
\end{align*}%
where $\pi \in \traceVars$ and $a \in \ap$. 

We assume that the formula is closed, i.e., all trace variables that are used in the body are bound by some quantifier.
The semantics of HyperLTL is given with respect to a trace assignment $\Pi : \traceVars \rightharpoonup \Sigma^\omega$ mapping trace variables to traces. 
For $\pi \in \traceVars$ and $t \in \Sigma^\omega$, we write $\Pi[\pi \mapsto t]$ for the trace assignment obtained by updating the value of $\pi$ to $t$. 
Given a set of traces $\traceSet \subseteq \Sigma^\omega$, a trace assignment $\Pi$, and $i \in \nat$, we define:
\begin{align*}
	\Pi,i &\ltlmodels  a_\pi &\text{iff} \quad  &a \in \Pi(\pi)(i)\\
	\Pi,i &\ltlmodels  \neg \psi &\text{iff} \quad & \Pi,i \not\ltlmodels  \psi \\
	\Pi,i &\ltlmodels  \psi_1 \land \psi_2 &\text{iff} \quad  &\Pi,i \models \psi_1 \text{ and }  \Pi,i \models  \psi_2\\
	\Pi,i &\ltlmodels  \ltln  \psi &\text{iff} \quad & \Pi, i+1 \ltlmodels \psi \\
	\Pi,i &\ltlmodels  \psi_1 \ltlu \psi_2 &\text{iff} \quad & \exists j \geq i \ldot \Pi, j\ltlmodels  \psi_2 \text{ and } \forall i \leq k < j \ldot  \Pi, k \ltlmodels  \psi_1\\[3mm]
	\Pi &\models_\traceSet \psi  &\text{iff} \quad &\Pi, 0 \models \psi \\
	\Pi &\models_\traceSet \exists \pi \ldot \varphi &\text{iff} \quad &\exists t \in \traceSet \ldot \Pi[\pi \mapsto t] \models_\traceSet  \varphi\\
	\Pi &\models_\traceSet  \forall \pi \ldot \varphi &\text{iff} \quad &\forall t \in \traceSet \ldot \Pi[\pi \mapsto t] \models_\traceSet  \varphi
\end{align*}%

A \emph{transition system} is a tuple $\calT = (S, S_0, \kappa, L)$ where $S$ is a set of states, $S_0 \subseteq S$ is a set of initial states, $\kappa \subseteq S \times S$ is a transition relation, and $L : S \to \Sigma$ is a labeling function. 
We write $s \myrightarrow{\calT} s'$ whenever $(s, s') \in \kappa$.
A path is an infinite sequence $s_0s_1s_2 \cdots \in S^\omega$, s.t., $s_0 \in S_0$, and $s_i \myrightarrow{\calT} s_{i+1}$ for all $i$.
The associated trace is given by $L(s_0)L(s_1)L(s_2) \cdots \in \Sigma^\omega$. 
We write $\traces{\calT} \subseteq \Sigma^\omega$ for the set of all traces generated by $\calT$.
We say $\calT$ satisfies a HyperLTL property $\varphi$, written $\calT \models \varphi$, if $\emptyset \models_{\traces{\calT}} \varphi$, where $\emptyset$ denotes the empty trace assignment.

\section{Automata-based HyperLTL Model Checking}\label{sec:recap}

Given a system $\calT$ and HyperLTL property $\varphi$, we want to decide whether $\calT \models \varphi$.
In this section, we recap the automaton-based approach to the model checking of HyperLTL \cite{FinkbeinerRS15}. 
We further show how language inclusion checks can be incorporated into the model checking procedure to make use of a broad collection of mature language inclusion checkers.

\subsection{Automata-based Verification}\label{sec:automatonBasedVerification}

The idea of automata-based verification (ABV) \cite{FinkbeinerRS15} is to iteratively eliminate quantifiers and thus reduce MC to the emptiness check on an automaton. 
A non-deterministic Büchi automaton (NBA) is a tuple $\calA = (Q, Q_0, \delta, F)$ where $Q$ is a finite set of states, $Q_0 \subseteq Q$ is a set of initial states, $\delta : Q \times \Sigma \to 2^Q$ is a transition function, and $F \subseteq Q$ is a set of accepting states. 
We write $\calL(\calA) \subseteq \Sigma^\omega$ for the language of $\calA$, i.e., all infinite words that have a run that visits states in $F$ infinitely many times (see, e.g., \cite{baier2008principles}).
For traces $t_1, \ldots, t_n \in \Sigma^\omega$, we write $\zip(t_1, \ldots, t_n) \in (\Sigma^n)^\omega$ as the pointwise product, i.e., $\zip(t_1, \ldots, t_n)(i) := (t_1(i), \ldots, t_n(i))$.

Let $\calT = (S, S_0, \kappa, L)$ be a fixed transition system and let $\dot{\varphi}$ be some fixed closed HyperLTL formula (we use the dot to refer to the original formula and use $\varphi, \varphi'$ to refer to subformulas of $\dot{\varphi}$). 
For some subformula $\varphi$ that contains free trace variables $\pi_1, \ldots, \pi_n$, we say an NBA $\calA$ over $\Sigma^n$ is \emph{$\calT$-equivalent} to $\varphi$, if for all traces $t_1, \ldots, t_n$ it holds that $[\pi_1 \mapsto t_1, \ldots, \pi_n \mapsto t_n] \models_{\traces{\calT}} \varphi$ iff $\zip(t_1, \ldots, t_n) \in \calL(\calA)$.
That is, $\calA$ accepts exactly the zippings of traces that constitute a satisfying trace assignment for $\varphi$.

To check if $\calT \models \dot{\varphi}$, we inductively construct an automation $\calA_\varphi$ that is $\calT$-equivalent to $\varphi$ for each subformula $\varphi$ of $\dot{\varphi}$.
For the (quantifier-free) LTL body of $\dot{\varphi}$, we can construct this automaton via a standard LTL-to-NBA construction \cite{FinkbeinerRS15,baier2008principles}.
Now consider some subformula $\varphi' = \exists \pi. \varphi$ where $\varphi'$ contains free trace variables $\pi_1, \ldots, \pi_n$ and so $\varphi$ contains free trace variables $\pi_1, \ldots, \pi_n, \pi$.
We are given an inductively constructed NBA $\calA_{\varphi} = (Q, Q_0, \delta, F)$ over $\Sigma^{n+1}$ that is $\calT$-equivalent to $\varphi$.
We define the automaton $\calA_{\varphi'} $ over $\Sigma^n$ as $\calA_{\varphi'} := (S \times Q, S_0 \times Q_0, \delta', S \times F)$ where $\delta'$ is defined as
\begin{align*}
	\delta'\Big((s, q), \big\langle l_1, \ldots, l_n\big\rangle \Big) := \Big\{(s', q') \mid s \xrightarrow{\calT} s' \;\land\; q' \in \delta\big(q, \big\langle l_1, \ldots, l_n, L(s)\big\rangle\big)	\Big\}.
\end{align*}
Informally, $\calA_{\varphi'}$ reads the zippings of traces $t_1, \ldots, t_n$ and guesses a trace $t \in \traces{\calT}$ such that $\zip(t_1, \ldots, t_n, t) \in \calL(\calA_{\varphi})$.
It is easy to see that $\calA_{\varphi'}$ is $\calT$-equivalent to $\varphi'$.
To handle universal trace quantification, we consider a formula $\varphi' = \forall \pi. \varphi$ as $\varphi' = \neg \exists \pi. \neg \varphi$ and combine the construction for existential quantification with an automaton complementation.

Following the inductive construction, we obtain an automaton $\calA_{\dot{\varphi}}$ over the singleton alphabet $\Sigma^0$ that is $\calT$-equivalent to $\dot{\varphi}$.
By definition of $\calT$-equivalence,  $\calT \models {\dot{\varphi}}$ iff $\emptyset \models_{\traces{\calT}} \dot{\varphi}$ iff $\calA_{\dot{\varphi}}$ is non-empty (which we can decide \cite{CourcoubetisVWY92}). 

\subsection{HyperLTL Model Checking by Language Inclusion}\label{sec:languageInclusion}

The algorithm outlined above requires one complementation for each quantifier alternation in the HyperLTL formula. 
While we cannot avoid the theoretical cost of this complementation (see \cite{Rabe16,ClarksonFKMRS14}), we can reduce to a, in practice, more tamable problem: \emph{language inclusion}. 

For a system $\calT$, and a natural number $n \in \nat$ we define $\calA_\calT^n$ as an NBA over $\Sigma^n$ such that for any traces $t_1, \ldots, t_n \in \Sigma^\omega$ we have $\zip(t_1, \ldots, t_n) \in \lang{\calA_\calT^n}$ if and only if $t_i \in \traces{\calT}$ for every $1 \leq i \leq n$. 
We can construct $\calA_\calT^n$ by building the $n$-fold self-composition of $\calT$ \cite{BartheDR11} and convert this to an automaton by moving the labels from states to edges and marking all states as accepting.
We can now state a formal connection between language inclusion and HyperLTL MC (a proof can be found in \ifFull{Appendix \ref{app:langInc}}{the full version \cite{fullVersion}}):

\begin{restatable}{proposition}{langInc}\label{prop:li}
	Let $\dot{\varphi} = \forall \pi_1. \ldots \forall \pi_n. \varphi$ be a HyperLTL formula (where $\varphi$ may contain additional trace quantifiers) and let $\calA_\varphi$ be an automaton over $\Sigma^n$ that is $\calT$-equivalent to $\varphi$.
	Then $\calT \models \dot{\varphi}$ if and only if $\lang{\calA_\calT^n} \subseteq \lang{\calA_\varphi}$.
\end{restatable}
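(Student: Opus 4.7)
The plan is to prove the biconditional by chaining together three definitional equivalences: the semantics of universal quantification in HyperLTL, the $\calT$-equivalence of $\calA_\varphi$ with $\varphi$, and the characterization of $\lang{\calA_\calT^n}$ as exactly the zippings of $n$-tuples of traces of $\calT$. The one non-definitional observation needed is that $\zip$ is a bijection between $(\Sigma^\omega)^n$ and $(\Sigma^n)^\omega$, i.e., every word over the product alphabet can be uniquely ``unzipped'' into $n$ traces over $\Sigma$. I would state this explicitly at the start of the proof as a simple lemma-style observation.

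First I would unfold the HyperLTL semantics. By definition, $\calT \models \dot{\varphi}$ iff $\emptyset \models_{\traces{\calT}} \forall \pi_1. \cdots \forall \pi_n. \varphi$, which, by $n$ applications of the semantics of $\forall$, is equivalent to the statement that for all $t_1, \ldots, t_n \in \traces{\calT}$, $[\pi_1 \mapsto t_1, \ldots, \pi_n \mapsto t_n] \models_{\traces{\calT}} \varphi$. By the $\calT$-equivalence of $\calA_\varphi$ with $\varphi$, this is equivalent to: for all $t_1, \ldots, t_n \in \traces{\calT}$, $\zip(t_1, \ldots, t_n) \in \lang{\calA_\varphi}$. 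Call this statement $(\star)$.

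Next I would show $(\star) \iff \lang{\calA_\calT^n} \subseteq \lang{\calA_\varphi}$. For the ``$\Leftarrow$'' direction, assume the inclusion and pick any $t_1, \ldots, t_n \in \traces{\calT}$; by the defining property of $\calA_\calT^n$, $\zip(t_1, \ldots, t_n) \in \lang{\calA_\calT^n}$, hence in $\lang{\calA_\varphi}$ by assumption. For the ``$\Rightarrow$'' direction, take any $w \in \lang{\calA_\calT^n}$. Using the unzip observation, write $w = \zip(t_1, \ldots, t_n)$ for unique $t_1, \ldots, t_n \in \Sigma^\omega$; the characterization of $\lang{\calA_\calT^n}$ then forces $t_i \in \traces{\calT}$ for each $i$, and $(\star)$ yields $w = \zip(t_1, \ldots, t_n) \in \lang{\calA_\varphi}$.

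There is no real obstacle here; the statement is essentially a bookkeeping result. The only subtlety to be careful about is the direction that starts from an arbitrary $w \in \lang{\calA_\calT^n}$: one must invoke the surjectivity of $\zip$ onto $(\Sigma^n)^\omega$, rather than silently assuming that every word over the product alphabet is already in zipped form. If one wished to be even more pedantic, one could package the three equivalences into a single chain of ``iff'' steps, but I would keep them separated for readability.
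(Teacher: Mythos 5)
Your proposal is correct and follows essentially the same route as the paper's proof: both directions hinge on the defining property of $\calA_\calT^n$, the $\calT$-equivalence of $\calA_\varphi$, and (for the inclusion-to-satisfaction-free direction) the fact that every word in $\lang{\calA_\calT^n}$ can be written as $\zip(t_1,\ldots,t_n)$ with each $t_i \in \traces{\calT}$. Your explicit isolation of the unzip observation and the intermediate statement $(\star)$ is a minor organizational difference, not a different argument.
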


We can use \Cref{prop:li} to avoid a complementation for the outermost quantifier alternation. 
For example, assume $\dot{\varphi} = \forall \pi_1. \forall \pi_2. \exists \pi_3. \psi$ where $\psi$ is quantifier-free.
Using the construction from \Cref{sec:automatonBasedVerification}, we obtain an automaton $\calA_{\exists \pi_3. \psi}$ that is $\calT$-equivalent to $\exists \pi_3. \psi$ (we can construct $\calA_{\exists \pi_3. \psi}$ in linear time in the size of $\calT$).
By \Cref{prop:li}, we then have $\calT \models \dot{\varphi}$ iff $\lang{\calA^2_\calT} \subseteq \lang{\calA_{\exists \pi_3. \psi}}$.

Note that complementation and subsequent emptiness check is a theoretically optimal method to solve the (\texttt{PSPACE}-complete) language inclusion problem. 
\Cref{prop:li} thus offers no asymptotic advantages over ``standard'' ABV in \Cref{sec:automatonBasedVerification}.
In \emph{practice} constructing an explicit complemented automaton is often unnecessary as the language inclusion or non-inclusion might be witnessed without a complete complementation \cite{Duret-LutzRCRAS22,DoveriGPR21,ClementeM19,DoveriGM22}.
This makes \Cref{prop:li} relevant for the present work and the performance of \AutoHyper.

\section{Related Work and HyperLTL Verification Approaches}\label{sec:relatedWork}

HyperLTL \cite{ClarksonFKMRS14} is the most studied logic for expressing hyperproperties. 
A range of problems from different areas in computer science can be expressed as HyperLTL MC problems, including (optimal) path panning \cite{0044NP20}, mutation testing \cite{FellnerBW21}, linearizability \cite{HsuSB21}, robustness \cite{DArgenioBBFH17}, information-flow control \cite{Rabe16}, and causality checking \cite{CoenenFFHMS22}, to name only a few.
Consequently, any model checking tool for HyperLTL is applicable to many disciples within computer science and provides a unified solution to many challenging algorithmic problems. 
In recent years, different (mostly incomplete) methods for the verification of HyperLTL have been developed. 
We discuss them below (see \ifFull{Appendix \ref{app:overviewMC}}{the full version \cite{fullVersion}} for details).

\paragraph{Automata-based Model Checking.}

Finkbeiner et al.~\cite{FinkbeinerRS15} introduce the automata-based model checking approach as presented in \Cref{sec:automatonBasedVerification}.
For alternation-free formulas, the algorithms corresponds to the construction of the self-composition of a system \cite{BartheDR11} and is implemented in the \texttt{MCHyper} tool \cite{FinkbeinerRS15}.
\texttt{MCHyper} can handle systems of significant size (well beyond the reach of explicit-state methods) but is unable to handle any quantifier alternation (the main motivation for \AutoHyper).
\texttt{htltl2mc} \cite{ClarksonFKMRS14} is a prototype model checker for HyperLTL$_2$ (a fragment of HyperLTL with at most one alternation) built on top of \texttt{GOAL} \cite{TsaiTH13}.
In contrast to \texttt{htltl2mc}, \AutoHyper{} supports properties with arbitrarily many quantifier alternations and features automata with symbolic alphabets -- which is important to handle large systems with many atomic propositions, cf.~\Cref{fn:symbolic_alphabet}).

\paragraph{Strategy-based Verification.}

Coenen et al.~\cite{CoenenFST19} verify $\forall^*\exists^*$ properties by instantiating existential quantification with an explicit strategy.
This method -- which we refer to as strategy-based verification (SBV) -- comes in two flavors: 
either the strategy is provided by the user or the strategy is synthesized automatically. 
In the former case, model checking reduces to checking an alternation-free formula and can thus handle large systems, but requires significant user effort (and is thus no ``push-button'' technique).
In the latter case, the method works fully automatically \cite{BeutnerF22a,BeutnerF22b} but requires an expensive strategy synthesis.
SBV is incomplete as the strategy resolving existentially quantified traces only observes finite prefixes of the universally quantified traces.
While SBV can be made complete by adding prophecy variables \cite{BeutnerF22b}, the automatic synthesis of such prophecies is currently limited to very small systems and properties that are temporally safe \cite{BeutnerCFHK22}.
We investigate both the performance and incompleteness of SBV in \Cref{sec:eval3}.

\paragraph{Bounded Model Checking.}

Hsu et al.~\cite{HsuSB21} propose a bounded model checking (BMC) procedure for HyperLTL.
Similar to BMC for trace properties \cite{BiereCCZ99}, the system is unfolded up to a fixed depth, and pending obligations beyond that depth are either treated pessimistically (to show the satisfaction of a formula) or optimistically (to show the violation of a formula).
While BMC for trace properties reduces to SAT-solving, BMC for hyperproperties naturally reduces to QBF-solving. 
As usual for bounded methods, BMC for HyperLTL is incomplete. 
For example, it can never show that a system satisfies a hyperproperty where the LTL body contains an invariant (as, e.g., is the case for \ref{prop:GNIintro}).\footnote{BMC for trace properties can be made complete by using bounds on the unrolling depth (also called completeness thresholds) \cite{ClarkeKOS04} and including loop conditions in the encoding \cite{BiereCCZ99}. As remarked by Hsu et al.~\cite{HsuSB21}, the same is much more challenging for hyperproperties, and no solutions have been proposed.
Instead, Hsu et al.~\cite{HsuSB21} propose an alternative unrolling semantics (which they call halting semantics) that can mitigate this incompleteness issue for programs that terminate after a \emph{fixed} number of steps. This is a strong (and often unrealistic) assumption for general reactive systems. }
We compare \AutoHyper{} and BMC (in the form of \texttt{HyperQube} \cite{HsuSB21}) in \Cref{sec:eval2}.

\section{AutoHyper: Tool Overview}

\AutoHyper{} is written in \texttt{F\#} and implements the automaton-based verification approach described in \Cref{sec:automatonBasedVerification} and, if desired by the user, makes use of the language-inclusion-based reduction from \Cref{sec:languageInclusion}.
\AutoHyper{} uses \spot{} \cite{Duret-LutzRCRAS22} for LTL-to-NBA translations and automata complementations.
To check language inclusion, \AutoHyper{} uses \spot{} (which is based on determinization), \rabit{} \cite{ClementeM19} (which is based on a Ramsey-based approach with heavy use of simulations), \bait{} \cite{DoveriGPR21}, and \forklift{} \cite{DoveriGM22} (both based on well-quasiorders).
\AutoHyper{} is designed such that communication with external automata tools is done via established text-based formats (opposed to proprietary APIs), namely the \texttt{HANOI} \cite{BabiakBDKKM0S15} and \texttt{BA} automaton formats. New (or updated) tools that improve on fundamental automata operations, such as complementation and inclusion checks, can thus be  integrated easily. 
Internally we represent automata using symbolic alphabets (similar to \spot{}).
We store transition formulas as DNFs as this allows for very efficient SAT checks, which are needed during the product construction. 

All experiments in this paper were conducted on a Mac Mini with an Intel Core i3 (i3-8100B) and 16GB of memory.
We used \spot{} version 2.11.1; \rabit{} version 2.4.5; \bait{} commit 369e1a4; and \forklift{} commit 5d519e3.

\paragraph{Input Formats.}

\AutoHyper{} supports both explicit-state systems (given in a \texttt{HANOI}-like \cite{BabiakBDKKM0S15} input format) and symbolic systems that are internally converted to an explicit-state representation. 
The support for symbolic systems includes Aiger circuits, symbolic models written in a fragment of the NuSMV input language \cite{nusmv}, and a simple boolean programming language \cite{BeutnerF21}.

\paragraph{Random Benchmarks.}

For our evaluation, we use both existing instances from various sources in the literature and randomly generated problems.\footnote{The advantage of randomly generated instances is twofold.
	First, it allows for the easy generation of a large set of benchmarks.
	Second, the random generation is parameterized by multiple parameters (such as system size, transition density, formula size, etc.), enabling a comprehensive analysis of the exact impact of different parameters on the model checking complexity in practice. }
We generate random transition systems based on the Erdős–Rényi–Gilbert model \cite{fienberg2012brief}.
Given a size $n$ and a density parameter $p \in [0, 1]$, we generate a graph with $n$ states, where for every two states $s, s'$, there is a transition $s \to s'$ with probability $p$.
To generate a graph with $n$ edges and, in expectation, constant outdegree of $k$, we can choose $p = \tfrac{k}{n}$.
We further ensure that the system is connected and all states have at least one outgoing edge.
We generate random HyperLTL formulas (with a given quantifier prefix) by sampling the LTL matrix using \spot{}'s \texttt{randltl}.

\section{HyperLTL Model Checking Complexity in Practice}\label{sec:eval1}

Before we turn our attention to benchmarks found in the literature, we compare the different backend inclusion checkers supported by \AutoHyper{} by evaluating them on a large set of synthetic (random) benchmarks (in \Cref{sec:backendComp}).
Moreover, the random generation of benchmarks allows us to peek at formulas with more than one quantifier alternation.
The theoretical hardness of model checking properties with multiple alternations has been studied extensively \cite{ClarksonFKMRS14,Rabe16}, and we analyze, for the first time, how these results transfer to practice (in \Cref{sec:qunatifierAlternation}). 

\subsection{Performance of Inclusion Checkers }\label{sec:backendComp}

As the first set of benchmarks, we compare the different backend inclusion checkers supported by \AutoHyper.
In \Cref{fig:largerInstances}, we depict how many instances can be solved using the inclusion checks of \spot{}, \bait{}, \rabit{}, and \forklift{} within a timeout of 10s and give the median running time used on the instances that could be solved within the timeout.
We observe that \spot{} clearly outperforms \rabit{}, \bait{}, and \forklift{} in terms of the percentage of instances that can be checked within 10s.\footnote{We remark that \spot{} operates on automata with a symbolic alphabet (i.e., transitions are defined as boolean formulas over $\ap$). In contrast, \rabit{}, \bait{}, and \forklift{} only support explicit alphabets (i.e., automata with one symbol for each element in $2^\ap$).
}
While, in general, \spot{} solves the most instances, a manual inspection reveals that there are also instances that can only be solved by \rabit{} or \bait{}/\forklift{}.
This justifies why \AutoHyper{} supports multiple backed inclusion checkers that implement different algorithms and thus excel on different problems (we will confirm this in \Cref{sec:eval2}). 
Moreover, our experiments provide evidence that HyperLTL MC is a natural source for challenging language inclusion benchmarks (see \ifFull{Appendix \ref{app:liBench}}{the full version \cite{fullVersion}}).

\begin{figure}[!t]
	\centering
	\vspace{-3mm}
	\includegraphics[width=1\linewidth]{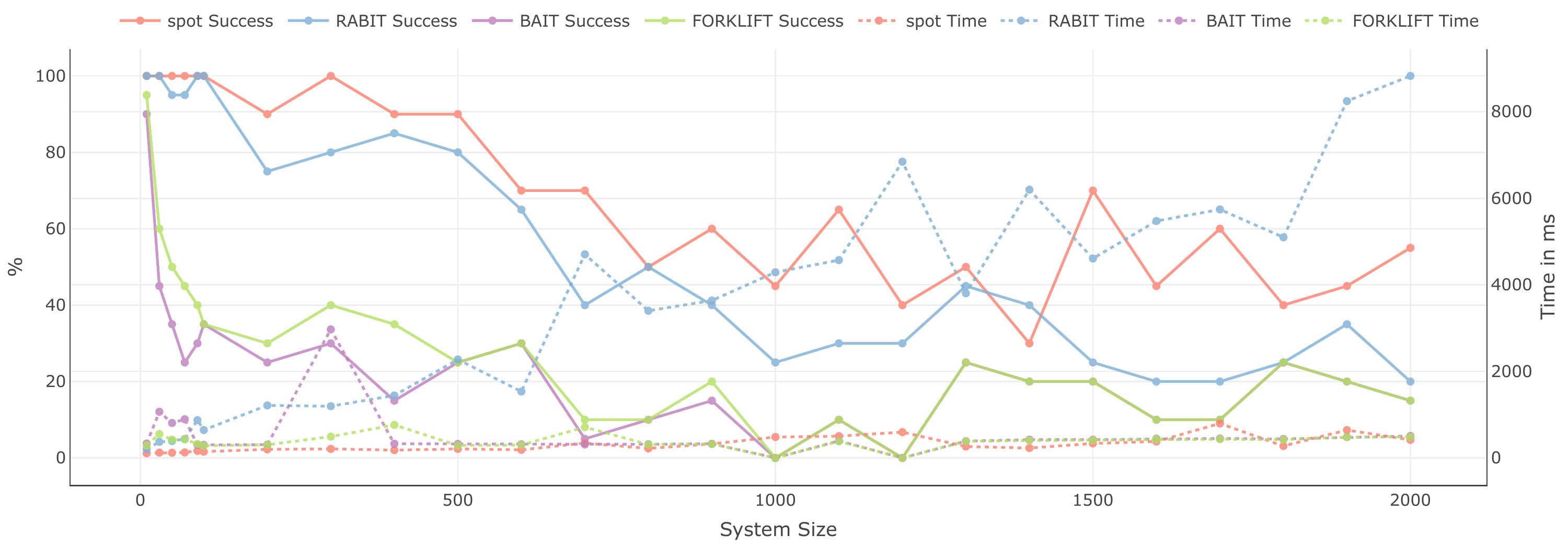}
	\vspace{-5mm}
	\caption{We evaluate different backend solvers on instances of varying system size with an (on average) constant outdegree of 10 and a fixed property size of 20.
		We generate 20 samples per system size. 
		We display both the success rate of each solver within a timeout of 10s (on the left axis) and the median running time on the solved instances (on the right axis). 
		\vspace{-2mm}
	}\label{fig:largerInstances}
\end{figure}

\begin{wrapfigure}{R}{0.45\textwidth}
	\vspace{-10mm}
	\begin{center}
		\includegraphics[width=0.9\linewidth]{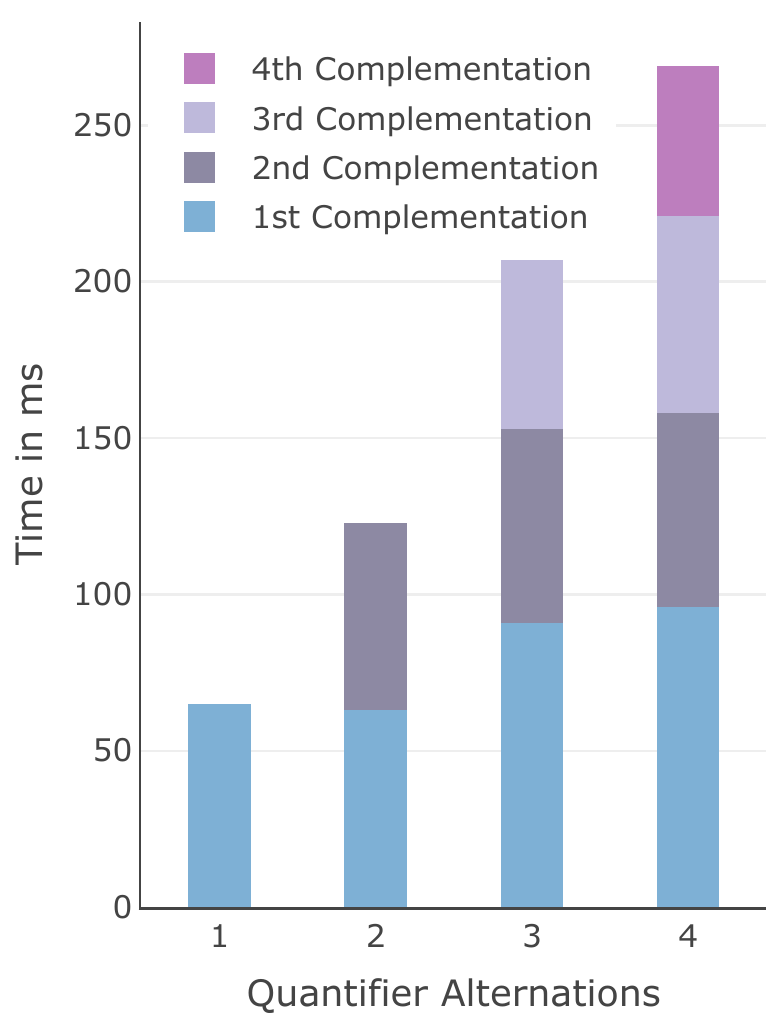}
	\end{center}	
	\vspace{-5mm}
	\caption{For properties with a varying number of quantifier alterations, we display the average time spent on the automata complementation during model checking. 
		\vspace{-16mm} }\label{fig:alternationCompCost}
\end{wrapfigure}

We remark that we set the timeout of 10s deliberately low to compute (and reproduce) the plots in a reasonable time (computing \Cref{fig:largerInstances} took about 3.5h).
If a user wants to verify a given instance and does not require a result within a few seconds, running the solver for even longer will likely increase the success rate further (see also the evaluation in \Cref{sec:eval2}).

\subsection{Model Checking Beyond $\forall^*\exists^*$}\label{sec:qunatifierAlternation}

Using randomly generated benchmarks, we can also peek at the practical complexity of model checking in the presence of multiple quantifier alternations.
In \emph{theory}, the model checking complexity of HyperLTL increases by one exponent with each quantifier alternation \cite{ClarksonFKMRS14,Rabe16}. 
Using \AutoHyper{}, we can, for the first time, investigate the model checking complexity in \emph{practice}. 

We model check randomly generated formulas with $1$ to $4$ quantifier alternations and visualize the total running time based on the cost of each complementation (using \spot{}) in \Cref{fig:alternationCompCost} (recall that checking a formula with $k$ alternations using ABV requires $k$ automaton complementations).
Although the number of quantifier alternations has an undeniable impact on the total running time (the cumulative height of each bar), the increase in runtime is not proportional to the (non-elementary) increase suggested by the theoretical analysis. 
Different from the theoretical analysis (where the $(k+1)$th complementation is exponentially more expensive than the $k$th), the cost of each complementation barely increases (or even decreases).
This suggests that the $\calT$-equivalent automata constructed in each iteration are, in practice, much smaller than indicated by the worst-case theoretical analysis.
Verification of properties beyond one alternation is thus less infeasible than the theory suggests (at least on randomly generated test cases).

\section{Evaluation on Symbolic Systems}\label{sec:eval2}

In this section, we challenge \AutoHyper{} with complex model checking problems found in the literature. 
Our benchmarks stem from a range of sources, including non-interference in boolean programs \cite{BeutnerF21}, symmetry in mutual exclusion algorithms \cite{CoenenFST19}, non-interference in multi-threaded programs \cite{SmithV98}, fairness  in non-repudiation protocols \cite{JamrogaMM11}, mutation testing \cite{FellnerBW21}, and path planning \cite{0044NP20}.

\subsection{Model Checking GNI on Boolean Programs}\label{sec:GNIInstances}

\begin{table}[!t]
	\caption{We depict the running time of \AutoHyper{} when verifying \ref{prop:GNIintro} on the boolean programs taken from \cite{BeutnerF21} and \cite{abs-2203-07283}. 
		We give the program, the bitwidth (bw), the size of the intermediate explicit-state representation (Size), and the time taken by each solver. 
		The timeout is set to 60s and indicated by a ``-''. The property holds in all cases. Times are given in seconds. }\label{tab:gniInstances}
	
	\begin{center}
		\begin{minipage}{0.5\textwidth}
			\centering
			\scalebox{0.83}{
				\def\arraystretch{1.2}
				\setlength\tabcolsep{3mm}
				\begin{tabular}{@{}c@{\hspace{2mm}}c@{\hspace{2mm}}c@{\hspace{2mm}}c@{\hspace{2mm}}c@{\hspace{2mm}}c@{\hspace{2mm}}c@{}}
					\toprule
					\textbf{Program} & \textbf{bw}  & \textbf{Size} & $\boldsymbol{t}_\texttt{spot}$ & $\boldsymbol{t}_\texttt{RABIT}$ & $\boldsymbol{t}_\texttt{BAIT}$ & $\boldsymbol{t}_\texttt{FORKLIFT}$ \\
					\midrule
					
					\multirow{3}{*}{\cite{BeutnerF21}.1}   & $1$-bit  & 17  & \textbf{0.52} & 0.59 & 0.80 & 0.61  \\
					&  $3$-bit &  65  & \textbf{0.56} & 0.86 & - & 22.73     \\
					&  $4$-bit &  129  & \textbf{0.99} & 5.51  & - & -    \\
					\arrayrulecolor{black!20}\midrule

					\multirow{1}{*}{\cite{BeutnerF21}.2}& $1$-bit  & 55 & \textbf{0.53}  & 0.69   & - & 5.49     \\
					\arrayrulecolor{black!20}\midrule

					\multirow{2}{*}{\cite{BeutnerF21}.3}  & $1$-bit  & 20  & \textbf{0.52}  & 0.61 & 3.05 & 0.98      \\
					&  $3$-bit &  80  & \textbf{0.61}  & 1.31 & - & -     \\
					\arrayrulecolor{black!20}\midrule

					\multirow{2}{*}{\cite{BeutnerF21}.4}  & $1$-bit  &  29 &  \textbf{0.52} & 0.56  & 0.58 & 0.57       \\
					&  $3$-bit &  113  & \textbf{0.67}  & 1.74 & - & -    \\
					\arrayrulecolor{black}\bottomrule
				\end{tabular}
			}
		\end{minipage}%
		\begin{minipage}{0.5\textwidth}
			\centering
			\scalebox{0.83}{
				\def\arraystretch{1.2}
				\setlength\tabcolsep{3mm}
				\begin{tabular}{@{}c@{\hspace{2mm}}c@{\hspace{2mm}}c@{\hspace{2mm}}c@{\hspace{2mm}}c@{\hspace{2mm}}c@{\hspace{2mm}}c@{}}
					\toprule
					\textbf{Program}  & \textbf{bw}  & \textbf{Size} & $\boldsymbol{t}_\texttt{spot}$ & $\boldsymbol{t}_\texttt{RABIT}$  & $\boldsymbol{t}_\texttt{BAIT}$  & $\boldsymbol{t}_\texttt{FORKLIFT}$  \\
					\midrule
					\multirow{1}{*}{\cite{abs-2203-07283}.1}  & $1$-bit  & 5 & 0.52  & \textbf{0.56}   & 0.58 & 0.57    \\
					\arrayrulecolor{black!20}\midrule
					
					\multirow{3}{*}{\cite{abs-2203-07283}.2} & $1$-bit  & 11 & \textbf{0.51}  & 0.57   & 0.72 & 0.61 \\
					& $2$-bit &  27  & \textbf{0.52} & 0.65  & 35.7 & 5.43  \\
					& $4$-bit &  291  & \textbf{1.46} & - & - & -   \\
					\arrayrulecolor{black!20}\midrule
					
					\multirow{2}{*}{\cite{abs-2203-07283}.3} & $1$-bit  & 21 & \textbf{0.52}  & 0.60 & 3.15 & 1.00   \\
					& $3$-bit &  225  & - & \textbf{45.2} & - & -  \\
					\arrayrulecolor{black!20}\midrule
					
					\multirow{2}{*}{\cite{abs-2203-07283}.4} & $1$-bit  & 25 & \textbf{0.52}  & 0.71    & 12.8 & 1.63  \\
					& $3$-bit &  193 & \textbf{0.98} & - & - & - \\
					
					\arrayrulecolor{black}\bottomrule
				\end{tabular}
			}
		\end{minipage}
	\end{center}
\end{table}

We use \AutoHyper{} to verify \ref{prop:GNIintro} on a range of boolean programs that process high-security and low-security inputs (taken from \cite{BeutnerF21,abs-2203-07283}).
\Cref{tab:gniInstances} depicts the runtime results using different backend solvers. 
We test each program with varying bitwidth and depict the largest bitwidth that can be solved by at least one solver (within a timeout of 60s).
We, again, note that \spot{} performs better than other inclusion checkers and, in particular, scales better when the size of the system increases.
Note that the number of atomic propositions is $3$ in all instances, so \spot{}'s support for symbolic alphabets has a negligible impact on the running time. 
We emphasize that not all instances in \Cref{tab:gniInstances} can be verified using SBV \cite{CoenenFST19,BeutnerF22b} without a user-provided fixed lookahead.
Likewise, BMC \cite{HsuSB21} can \emph{never} verify \ref{prop:GNIintro}.
This provides further evidence why complete model checking tools (of which \AutoHyper{} is the first) are necessary.

\subsection{Explicit Model Checking of Symbolic Systems}

\begin{table}[!t]
	\caption{We evaluate \texttt{HyperQube} and \AutoHyper{} on the benchmarks from \cite{HsuSB21}. We list the system and the property (as given in \cite[Table 2]{HsuSB21}), the quantifier structure ($\boldsymbol{Q}^*$), the verification result (Res) (\cmark{} indicates that the property holds and \xmark{} that it is violated), and the total running time of either tool ($t$).
		For \texttt{HyperQube}, we additionally list the unrolling bound ($k$) and the unrolling semantics (Sem).
		For \AutoHyper{}, we additionally list the size of the intermediate explicit state space (Size).
		Times are given in seconds. 
	}\label{tab:bmc}
	
	\begin{center}
		\small
		\def\arraystretch{1.2}
		\setlength\tabcolsep{3mm}
		\setlength\arrayrulewidth{0.7pt}
		
		\begin{tabular}{@{\hspace{3mm}}l@{\hspace{2mm}}l@{\hspace{4mm}}l@{\hspace{4mm}}l@{\hspace{6mm}}l@{\hspace{5mm}}l@{\hspace{5mm}}l@{\hspace{8mm}}l@{\hspace{5mm}}l@{\hspace{3mm}}}
			\toprule
			&& && \multicolumn{3}{@{}c@{\hspace{8mm}}}{\textbf{HyperQube} \cite{HsuSB21}} &\multicolumn{2}{c}{\textbf{AutoHyper}}\\
			\cmidrule[1pt](l{-2mm}r{6mm}){5-7}
			\cmidrule[1pt](l{-2mm}){8-9}
			\textbf{System} & \textbf{Spec} & $\boldsymbol{Q}^*$ & \textbf{Res} & $\boldsymbol{k}$ & \textbf{Sem} \!\!\!\!\! & $\boldsymbol{t}$ & \textbf{Size} & $\boldsymbol{t}$ \\
			\midrule
			Bakery$_3$ & $\varphi_{S1}$ & $\exists\exists$ & \xmark & 7 & pes  & \textbf{1.9} & 167  & 2.3 \\
			Bakery$_3$ & $\varphi_{S2}$ & $\forall\exists$ & \xmark & 12 & pes & \textbf{2.0} & 167  & 4.2 \\
			Bakery$_3$ & $\varphi_{S3}$ & $\exists\forall$ & \xmark$^!$ & 20 & pes  & \textbf{2.8} & 167  & 34.6 \\
			Bakery$_3$ & $\varphi_{\mathit{sym}1}$ \!\!\!\!\!&$\forall\exists$ & \xmark & 10 & pes  & \textbf{1.7} & 167  & 16.2\\
			Bakery$_3$ & $\varphi_{\mathit{sym}2}$\!\!\!\!\! & $\forall\exists$& \xmark & 10 & pes  & \textbf{1.6} & 167  & 2.9 \\
			Bakery$_5$ & $\varphi_{\mathit{sym}1}$ \!\!\!\!\!&$\forall\exists$ & \xmark & 10  & pes  & \textbf{17.3} & 996  & 282.1 \\
			Bakery$_5$ & $\varphi_{\mathit{sym}2}$\!\!\!\!\! &$\forall\exists$ & \xmark & 10 & pes  & 18.2 & 996   & \textbf{18.0}  \\
			\arrayrulecolor{black!20}\midrule
			SNARK-bug1   & $\varphi_{\mathit{lin}}$ &$\forall\exists$ & \xmark & 26  & hpes  & 618.0 & 4941 & \textbf{96.1} \\
			\arrayrulecolor{black!20}\midrule
			3-Thread$_\mathit{correct}$ & $\varphi_{\mathit{NI}}$ & $\forall\exists$ & \cmark & 10 & hopt & 1.6 & 64  & \textbf{1.3} \\
			3-Thread$_\mathit{incorrect}$& $\varphi_{\mathit{NI}}$ & $\forall\exists$ & \xmark & 57 & hpes & 12.8 & 368  & \textbf{7.7} \\
			\arrayrulecolor{black!20}\midrule
			$\mathit{NRP} : T_\mathit{correct}$  & $\varphi_{\mathit{fair}}$ & $\exists\forall$  & \cmark& 15 & hopt & 1.3 & 55  & \textbf{0.5} \\
			$\mathit{NRP} : T_\mathit{incorrect}$ & $\varphi_{\mathit{fair}}$ & $\exists\forall$ & \cmark$^!$ & 15 & hopt & 1.4 & 54  & \textbf{0.8} \\
			\arrayrulecolor{black!20}\midrule 
			$\mathit{Mutant}$  & $\varphi_{\mathit{mut}}$ & $\exists\forall$ & \cmark & 8 & hopt & 1.1 & 32  & \textbf{0.8} \\
			\arrayrulecolor{black}\bottomrule
		\end{tabular}
	\end{center}
\end{table}

In this section, we evaluate \AutoHyper{} on challenging symbolic models (NuSMV models \cite{nusmv}) that were used by Hsu et al.~\cite{HsuSB21} to evaluate \texttt{HyperQube}.

The properties we verify cover a wide range of properties. 
For example, we verify that Lamport's bakery algorithm \cite{Lamport74a} does not satisfy various symmetry properties (as the algorithm prioritizes processes with a lower ticket ID); 
We check linearizability\footnote{Linearizability asserts that any execution of a concurrent data structure corresponds to a sequential execution, which is naturally expressed as a $\forall\exists$ hyperproperty.} \cite{HerlihyW90} on the SNARK datastructure \cite{DohertyDGFLMMSS04} and identify a previously known bug;
And, we generate model-based mutation test cases using the approach proposed by Fellner et al.~\cite{FellnerBW21}.
Further details on the benchmarks are provided in \cite{HsuSB21}.

We check each instance using both \texttt{HyperQube} and \AutoHyper{} and depict the results in \Cref{tab:bmc}.\footnote{
	For the two verification instances (Bakery$_3$,$\varphi_{S3}$) and ($\mathit{NRP} : T_\mathit{incorrect}$, $\varphi_{\mathit{fair}}$) \texttt{HyperQube} provides the wrong verification result.
	We mark such instances with a ``$!$'' to avoid confusion when comparing \Cref{tab:bmc} with \cite[Table 2]{HsuSB21}.
	In particular, the supposedly unfair version of the NRP protocol is, in fact, fair. 
}
When using \AutoHyper{} we always apply \spot{}'s inclusion checker.\footnote{The automata use a symbolic alphabet with up to 18 letters.
	A conversion to an explicit alphabet -- as required for \rabit{}, \bait{}, and \forklift{} -- is thus infeasible (this would require $2^{18}$ symbols). \label{fn:symbolic_alphabet}}
For \texttt{HyperQube} we use the unrolling semantics and unrolling depth listed in \cite[Table 2]{HsuSB21}.
We observe that for most instances -- despite using explicit state methods and thus being complete (cf.~\Cref{sec:bmc_vs_explict}) -- \AutoHyper{} performs on par with \texttt{HyperQube}.
On instances using Lamport's bakery algorithm, BMC only needs to unroll to very shallow depths, resulting in very efficient solving, whereas  \AutoHyper{}'s running time is dominated by \spot{}'s LTL-to-NBA translation (consuming up to 98\% of the total time).
Conversely, on the large SNARK example, \AutoHyper{} performs significantly better. 

\subsection{Hyperproperties for Path Planning}

As a last set of benchmarks, we use planning problems for robots encoded into HyperLTL as proposed by Wang et al.~\cite{0044NP20}.
For example, the synthesis of a shortest path can be phrased as a $\exists\forall$ property that states that there exists a path to the goal such that all alternative paths to the goal take at least as long. 
Wang et al.~\cite{0044NP20} propose a solution to check the resulting HyperLTL property by encoding it in first-order logic, which is then solved by an SMT solver.
While not competitive with state-of-the-art planning tools, HyperLTL allows one to express a broad range of problems (shortest path, path robustness, etc.) in a very general way. 
Hsu et al.~\cite{HsuSB21} observe that the QBF encoding implemented in \texttt{HyperQube} outperforms the SMT-based approach by Wang et al.~\cite{0044NP20}.
In this section, we evaluate \AutoHyper{} on these planning-hyperproperties and compare it with \texttt{HyperQube}\footnote{\AutoHyper{} is intended as a model checking tool, i.e., it only checks if a property holds or is violated. However, as we show in \ifFull{Appendix \ref{app:planningViLI}}{the full version \cite{fullVersion}}, we could use the counterexamples returned by the inclusion checker to \emph{synthesize} an actual plan. }.

We depict the results in \Cref{tab:planning}. 
It is evident that \AutoHyper{} outperforms \texttt{HyperQube}, sometimes by orders of magnitude.
This is surprising as planning problems (which are essentially reachability problems) on symbolic systems should be advantageous for symbolic methods such as BMC. 
The large size of the intermediate QBF indicates that a more optimized encoding (perhaps specific to path planning) could improve the performance of BMC on such examples. 

\begin{table}[!t]
	\caption{We evaluate \texttt{HyperQube} and \AutoHyper{} on hyperproperties that encode the existence of a shortest path ($\varphi_\mathit{sp}$) and robust path ($\varphi_\mathit{rp}$).
		We give the specification (Spec), the size of the grid (Grid), and the times taken by \texttt{HyperQube} and \AutoHyper{} ($t$).
		For \texttt{HyperQube}, we additionally give the unrolling depth used ($k$) and the file size of the QBF generated (|QBF|).
		For \AutoHyper{}, we additionally give the size of the generated explicit state space (Size).
		Times are given in seconds. The timeout is set to 20 min and indicated by a ``-''.
	}\label{tab:planning}
	
	\begin{center}
		\small
		\def\arraystretch{1.2}
		\setlength\tabcolsep{3mm}
		\begin{tabular}{cc@{\hspace{10mm}}lll@{\hspace{10mm}}ll}
			\toprule
			&& \multicolumn{3}{@{}c@{\hspace{10mm}}}{\textbf{HyperQube \cite{HsuSB21}}} & \multicolumn{2}{c}{\textbf{AutoHyper}}\\
			\cmidrule[1pt](l{-1mm}r{9mm}){3-5}
			\cmidrule[1pt](l{-1mm}){6-7}
			\textbf{Spec} & \textbf{Grid}  & $\boldsymbol{k}$ & \textbf{|QBF|} & $\boldsymbol{t}$ & \textbf{Size} &  $\boldsymbol{t}$ \\
			\midrule
			\multirow{4}{*}{$\varphi_{\mathit{sp}}$} & $10 \times 10$ & 20 & 8 MB & 4.6 & 146  & \textbf{0.68} \\
			& $20 \times 20$  & 40 & 26 MB  & 168.1 & 188 & \textbf{1.50} \\
			& $40 \times 40$ & 80 & - & - & 408 &\textbf{22.70}\\
			& $60 \times 60$  & 120 & - & - & 404 & \textbf{88.83} \\
			\arrayrulecolor{black!20}\midrule 
			\multirow{3}{*}{$\varphi_{\mathit{rp}}$} & $10 \times 10$  & 20  & 13 MB  & 4.2 & 266 & \textbf{0.59} \\
			& $20 \times 20$  & 40 & 84 MB & 22.4 & 572 & \textbf{0.78}\\
			& $40 \times 40$  & 80 & 419 MB & 265 & 1212 & \textbf{1.58}\\
			& $60 \times 60$ & 120 & - & - & 1852 &\textbf{3.70}\\
			\arrayrulecolor{black}\bottomrule
		\end{tabular}
	\end{center}
\end{table}

\subsection{Bounded vs.~Explicit-State Model Checking}\label{sec:bmc_vs_explict}

Bounded model checking has seen remarkable success in the verification of trace properties and frequently scales to systems whose size is well out of scope for explicit-state methods \cite{CoptyFFGKTV01}.
Similarly, in the context of \emph{alternation-free} hyperproperties, symbolic verification tools such as \texttt{MCHyper} \cite{FinkbeinerRS15} (which internally reduces to the verification of a circuit using \texttt{ABC} \cite{BraytonM10}) can verify systems that are well beyond the reach of explicit-state methods. 
In contrast, in the context of model checking for hyperproperties that involve \emph{quantifier alternations}, our findings make a strong case for the use of explicit-state methods (as implemented in \AutoHyper{}):

First, compared to symbolic methods (such as BMC), explicit-state model checking is currently the only method that is \emph{complete}.
While BMC was able to verify or refute all properties in \Cref{tab:bmc,tab:planning}, many instances cannot be solved with the current BMC encoding. 
As a concrete example, BMC can \emph{never} verify formulas whose body contains simple invariants (such as \ref{prop:GNIintro}) and can thus not verify any of the instances in \Cref{tab:gniInstances}.
The most significant  advantage of explicit-state MC (as implemented in \AutoHyper{}) is thus that it is both push-button and complete, i.e., it can -- at least in theory -- verify or refute all properties. 

Second, the performance of \AutoHyper{} seems to be \emph{on-par} with that of BMC and frequently outperforms it (even by several orders of magnitude, cf.~\Cref{tab:planning}). 
We stress that this is despite the fact that for the evaluation of \texttt{HyperQube} we already fix an unrolling depth and unrolling semantics, thus creating favorable conditions for \texttt{HyperQube}.\footnote{
	In \Cref{tab:bmc,tab:planning}, we perform a single query with a fixed unrolling depth $k$ and semantics, i.e., we already know if we want to show satisfaction or violation and the depth needed to show this (as done in \cite{HsuSB21}).
	In a classical BMC loop, we would check for satisfaction and violation with an incrementally increasing unrolling depth and thus perform roughly $2k$ many QBF queries where $k$ is the least bound for which satisfaction or violation can be established (if this bound even exists).}
While BMC for trace properties reduces to SAT solving, BMC of hyperproperties reduces to QBF solving; a problem that is much harder and has seen less supported by industry-strength tools. 
It is, therefore, unclear whether the advance of modern QBF solvers can improve the performance of hyperproperty BMC, to the same degree that the advance of SAT solvers has stimulated the success of BMC for trace properties. 
Our findings seem to indicate that, at the moment, QBF solving (often) seems inferior to an explicit (automata-based) solving strategy.

\section{Evaluating Strategy-based Verification}\label{sec:eval3}
\vspace{-1mm}

So far, we have used \AutoHyper{} to check hyperproperties on instances arising in the literature.  
In this last section, we demonstrate that \AutoHyper{} also serves as a valuable baseline to evaluate different (possibly incomplete) verification methods. 
Here we focus on strategy-based verification (SBV), i.e., the idea of automatically synthesizing a strategy that resolves existential quantification in $\forall^*\exists^*$ HyperLTL properties \cite{CoenenFST19,BeutnerF22b}.

\subsection{Effectiveness of Strategy-based Verification}\label{sec:SBVSuccess}

\begin{wrapfigure}{R}{0.6\textwidth}
	\centering
	\includegraphics[width=\linewidth]{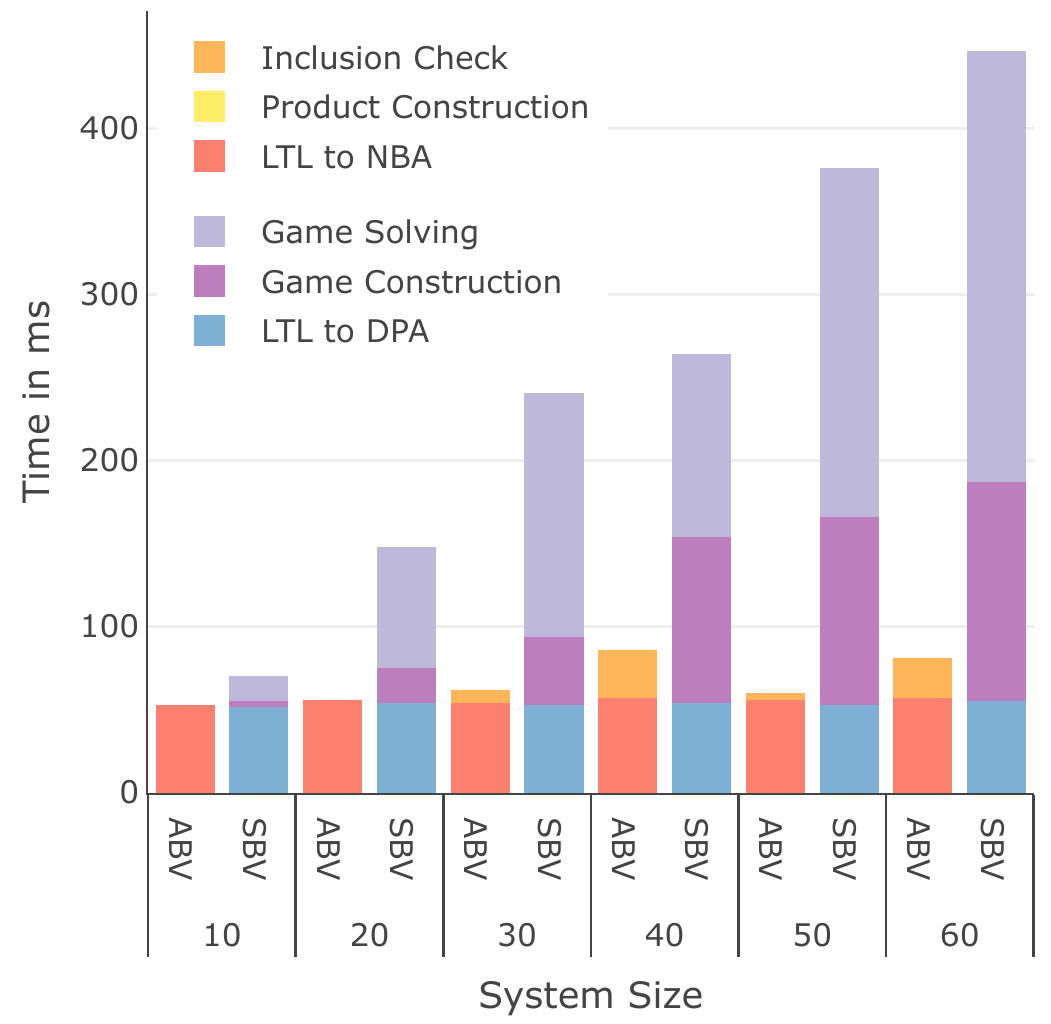}
	\caption{Comparison of ABV and SBV on instances of varying system size and fixed property size of 20.
		We generate 100 random instances for each size and take the average over the fastest $L$ instances, where $L$ is the minimum number of instances solved within a 5s timeout by both methods. }\label{fig:runtimesABVvsSBVSystemSize}
\end{wrapfigure}

SBV is known to be incomplete \cite{CoenenFST19,BeutnerF22b}.
However, due to the previous lack of \emph{complete} tools for verifying $\forall^*\exists^*$ properties, a detailed study into how effective SBV is in practice was impossible on a larger scale (i.e., beyond hand-crafted examples).
With \AutoHyper, we can, for the first time, rigorously evaluate SBV.
We use the SBV implementation from \cite{BeutnerF22b}, which synthesizes a strategy for the $\exists$-player by translating the formula to a deterministic parity automaton (DPA) \cite{Piterman07} and phrases the synthesizes as a parity game.

We have generated random transition systems and properties of varying sizes and computed a ground truth using \AutoHyper{}.
We then performed SBV (recall that SBV can never show that a property does not hold and might fail to establish that it does).
We find that for our generated instances, the property holds in \textbf{61.1\%} of the cases, and SBV can verify the property in \textbf{60.4\%} of the cases.
Successful verification with SBV is thus possible in many cases, even without the addition of expensive mechanisms such as prophecies \cite{BeutnerF22b}.
On the other hand, our results show that random generation produces instances (albeit not many) on which SBV fails (so far, examples where SBV fails required careful construction by hand).
Reverting to SBV as the default verification strategy is thus not possible, further strengthening the case for complete model checking tools (of which \AutoHyper{} is the first). 

\subsection{Efficiency of Strategy-based Verification}\label{sec:SBVtime}

After having analyzed the effectiveness of SBV (i.e., how many instances \emph{can} be verified), we turn our attention to the efficiency of SBV.
In theory, (automata-based) model checking of $\forall^*\exists^*$ HyperLTL -- as implemented in \AutoHyper{} -- is \texttt{EXPSPACE}-complete in the specification and \texttt{PSPACE}-complete in the size of the system \cite{ClarksonFKMRS14,Rabe16}.
Conversely, SBV is $2$-\texttt{EXPTIME}-complete in the size of the specification but only \texttt{PTIME} in the size of the system \cite{CoenenFST19}.
Consequently, one would expect that ABV fares better on larger specifications and SBV fares better on larger systems (the more important measure in practice).

However, in this section, we show that this does not translate into practice (at least using the current implementation of SBV \cite{BeutnerF22b}). 
We compare the running time of \AutoHyper{} (ABV) (using \spot{}'s inclusion checker) and SBV.
We break the running time into the three main steps for each method.
For ABV, this is the LTL-to-NBA translation, the construction of the product automaton, and the inclusion check. 
For SBV, it is the LTL-to-DPA translation, the construction of the game, and the game-solving.  

We depict the average cost for varying system sizes in \Cref{fig:runtimesABVvsSBVSystemSize}.
We observe that SBV performs worse than ABV and, more importantly, scales poorly in the size of the system. 
This is contrary to the theoretical analysis of ABV and SBV.
As the detailed breakdown of the running time suggests, the poor performance is due to the costly construction of the game and the time taken to solve the game. 
An almost identical picture emerges if we compare ABV in SBV relative to the property size (we give a plot in \ifFull{Appendix \ref{app:specSizeABVSBV}}{the full version \cite{fullVersion}}). 
While, in this case, the results match the theory (i.e., SBV scales worse in the size of the specification), we find that the bottleneck for SBV is not the LTL-to-DPA translation (which, in theory, is exponentially more expensive than the LTL-to-NBA translation used in ABV), but, again the construction and solving of the parity game.  

We remark that the SBV engine we used \cite{BeutnerF22b} is not optimized and always constructs the full (reachable) game graph.
The poor performance of SBV can be attributed to the fact that the size of the game does, in the worst case, scale quadratically in the size of the system (when considering $\forall^1\exists^1$ properties). 
This is amplified in dense systems (i.e., systems with many transitions), as, with increasing transition density, the size of the parity games approaches its worst-case size (see \ifFull{Appendix \ref{app:effSizeABVSBV}}{the full version \cite{fullVersion}}).
In contrast, the heavily optimized inclusion checker (in this case \spot{}) seems to be able to check inclusion in almost constant time (despite being exponential in theory).
This efficiency of mature language inclusion checkers is what enables \AutoHyper{} to achieve remarkable performance that often exceeds that of symbolic methods such as BMC (cf.~\Cref{sec:eval2}) and further strengthens the practical impact of \Cref{prop:li}.

\section{Conclusion}

In this paper, we have presented \AutoHyper, the first complete model checker for HyperLTL with an arbitrary quantifier prefix.
We have demonstrated that \AutoHyper{} can check many interesting properties involving quantifier alternations and often outperforms symbolic methods such as BMC, sometimes by orders of magnitude. 
We believe the biggest advantage of \AutoHyper{} to be its push-button functionality combined with its completeness:
As a user, one does not need to worry whether \AutoHyper{} is applicable to a particular property (different from, e.g., SBV or BMC) and does not need to provide hints (e.g., in the form of explicit strategies in SBV).

Apart from evaluating \AutoHyper's performance on a range of benchmarks, we have used \AutoHyper{} to \textbf{(1)} compare various backend language inclusion checkers, \textbf{(2)} explore practical verification beyond one quantifier alternation (which is not as infeasible as suggested by the theory),  and \textbf{(3)} rigorously evaluate the effectiveness and efficiency of strategy-based verification in practice (which, different than suggested by a theoretical analysis, performs worse than automata-based methods).

\subsubsection{Acknowledgments.}

This work was partially supported by the DFG in project 389792660 (Center for Perspicuous Systems, TRR 248), and by by the ERC Grant HYPER (No.~101055412).
R.~Beutner carried out this work as a member of the Saarbrücken Graduate School of Computer Science.

\section*{Data Availability Statement}

\AutoHyper{} and all experiments are available at \cite{artifact}. 

%
%
\bibliographystyle{splncs04}
\bibliography{references}

\iffullversion

\appendix

\newpage

\section{HyperLTL Model Checking and Language Inclusion}\label{app:langInc}

\langInc*
\begin{proof}
	For the first direction we assume that $\lang{\calA^n_\calT} \subseteq \lang{\calA_\varphi}$ and need to show that $\calT \models \dot{\varphi} = \forall \pi_1. \ldots \forall \pi_n. \varphi$.
	So let $t_1, \ldots, t_n \in \traces{\calT}$ be arbitrary. 
	By definition we have $\zip(t_1, \ldots, t_n) \in \lang{\calA^n_\calT} $ and so by assumption $\zip(t_1, \ldots, t_n) \in  \lang{\calA_\varphi}$. 
	By $\calT$-equivalence of $\calA_\varphi$ this implies $[\pi_1 \mapsto t_1, \cdots, \pi_n \mapsto t_n] \models_{\traces{\calT}} \varphi$ as required.
	
	For the reverse we assume that $\calT \models \dot{\varphi}$ and need to show that $\lang{\calA^n_\calT} \subseteq \lang{\calA_\varphi}$.
	Let $t \in \lang{\calA^n_\calT}$ be arbitrary.
	By definition of $\calA_\calT$ and $\mathit{zip}$ we have $t = \zip(t_1, \ldots, t_n)$ for some $t_1, \ldots, t_n \in \traces{\calT}$.
	As $\calT \models \dot{\varphi} = \forall \pi_1. \ldots \forall \pi_n. \varphi$ we have $[\pi_1 \mapsto t_1, \cdots, \pi_n \mapsto t_n] \models_{\traces{\calT}} \varphi$.
	By $\calT$-equivalence this implies $t = \zip(t_1, \ldots, t_n) \in  \lang{\calA_\varphi}$ as required.
	\qed
\end{proof}

\section{Overview of Verification Methods}\label{app:overviewMC}

In recent years, many different methods to verify HyperLTL properties have been developed. 
In the following, we summarize the existing approaches (possibly repeating information from the main body) and briefly discuss their relative strengths and weaknesses in Section \ref{sec:strengthsAndWeakness}.

\paragraph{Alternation-free HyperLTL and Manual Strengthening.}
HyperLTL properties without quantifier alternations can be reduced to checking a trace property on the self-composition of the system \cite{BartheDR11} and thus have the same model checking complexity as LTL (\texttt{NL}-complete in the size of the system and \texttt{PSPACE}-complete in the size of the formula).
This verification approach for HyperLTL has been implemented in \texttt{McHyper} \cite{FinkbeinerRS15}.
By using \texttt{ABC} \cite{BraytonM10} as the backend solver, \texttt{McHyper} is applicable to circuits with thousands of latches.
This is well beyond the reach of explicit-state model checking approaches.

While \texttt{McHyper} supports only alternation-free formulas, Finkbeiner et al.~\cite{FinkbeinerRS15} and D'Argenio et al.~\cite{DArgenioBBFH17} use it to verify properties involving quantifier alteration by \emph{manually} strengthening them into alternation-free formulas.
For example, a $\forall \exists$ property can be strengthened by replacing the existential quantification with a universal one, resulting in a $\forall\forall$ property.
The soundness of this strengthening must be argued manually and cannot be checked automatically. 
Moreover, such strengthening is specific to a property and -- more importantly -- also to the model; the method is thus not applicable to general HyperLTL MC. 

\paragraph{Strategy-based Verification.}

Coenen et al.~\cite{CoenenFST19} verify $\forall^*\exists^*$ properties by instating the existential quantification with an explicit strategy.
This method -- which we refer to as strategy-based verification (SBV) -- comes in two flavors: 
either the strategy is provided by the user or the strategy is synthesized. 
In the former case, model checking directly reduces to checking an alternation-free formula (which is cheap).
In fact, model checking with a given strategy is as expensive as checking alternation-free HyperLTL (\texttt{NL}-complete in the system) and can be performed on systems of considerable size. 
On the other hand, supplying an explicit strategy requires deep insight into the system. 
Different from BMC (as in \texttt{HyperQube}) or explict-state MC (as in \AutoHyper), verification with a given strategy is thus no push-button technique which is why we did not evaluate against it.
In the latter case, the method works fully automatically but requires a more expensive strategy synthesis.
SBV with automated strategy synthesis is, in theory, more efficient than ABV (in the size of the system):
In theory, (automata-based) model checking of $\forall^*\exists^*$ HyperLTL -- as implemented in \AutoHyper{} -- is \texttt{EXPSPACE}-complete in the specification and \texttt{PSPACE}-complete in the size of the system \cite{ClarksonFKMRS14,Rabe16}.
Conversely, SBV is $2$-\texttt{EXPTIME}-complete in the size of the specification but \texttt{PTIME} in the size of the system \cite{CoenenFST19}.
SBV is, however, incomplete as the strategy resolving existentially quantified traces only observes finite prefixes of the universally quantified traces (for examples see, e.g., \cite{BeutnerF22b}).
While prophecies have recently been proposed as a countermeasure to ensure completeness \cite{BeutnerF22b}, no efficient method for the synthesis of prophecies exists.
Current synthesis approaches for prophecies are limited to systems with $\leq 20$ states \cite{BeutnerF22b}. 

\newcommand*\emptycirc[1][1ex]{\tikz\draw (0,0) circle (#1);} 
\newcommand*\halfcirc[1][1ex]{%
	\begin{tikzpicture}
		\draw[fill] (0,0)-- (90:#1) arc (90:270:#1) -- cycle ;
		\draw (0,0) circle (#1);
\end{tikzpicture}}
\newcommand*\fullcirc[1][1ex]{\tikz\fill (0,0) circle (#1);} 

\begin{figure}[!t]
	\centering
		\small
	\scalebox{0.85}{
		\def\arraystretch{1.5}
		\setlength\tabcolsep{3mm}
		\begin{tabular}{@{\hspace{0mm}}l@{\hspace{2mm}}c@{\hspace{3mm}}c@{\hspace{3mm}}c@{\hspace{3mm}}c@{\hspace{3mm}}c@{\hspace{3mm}}c@{}}
			&\textbf{Sound} & \textbf{AA}& \textbf{Symbolic} & \textbf{PB} & \textbf{NL}  & \textbf{Complete}\\
			\textbf{Manual Strengthening \cite{FinkbeinerRS15,DArgenioBBFH17} }&\halfcirc&\halfcirc&\fullcirc &\emptycirc &\fullcirc&\emptycirc\\
			\textbf{SBV[Explicit Strategy] \cite{CoenenFST19}} &\fullcirc&\emptycirc&\fullcirc &\emptycirc &\fullcirc&\emptycirc\\
			\textbf{SBV[Strategy Synthesis] \cite{BeutnerF22b,BeutnerF22a}} & \fullcirc&\emptycirc&\halfcirc &\fullcirc &\emptycirc&\emptycirc\\
			\textbf{BMC \cite{HsuSB21}} &\fullcirc&\fullcirc& \fullcirc &\fullcirc &\emptycirc&\emptycirc\\
			\textbf{Explicit ABV [This work]} &\fullcirc&\fullcirc& \emptycirc &\fullcirc &\emptycirc &\fullcirc\\
		\end{tabular}
	}
	
	\caption{We list existing verification approaches for \HyperLTL{} together with their advantages and disadvantage. 
		\fullcirc{} means that the ``property'' holds, \emptycirc{{}} that it does not, and \halfcirc{} indicates that no clear answer is possible.
		We abbreviate:
		\textbf{AA} = Arbitrary alternation; \textbf{Symbolic} = Support for symbolic systems; \textbf{PB} = Push-button technique; \textbf{NL}=very efficient (\texttt{NL}) model checking in the size of the system. }\label{fig:comp}
\end{figure}

\paragraph{Bounded Model Checking.}

Hsu et al.~\cite{HsuSB21} proposed a bounded model checking (BMC) procedure for HyperLTL.
Similar to BMC for trace properties (see e.g., \cite{BiereCCZ99}) the formula is unfolded up to a fixed depth and pending obligations beyond that depth are either treated pessimistically (allowing to witness the satisfaction of a formula) or optimistically (allowing to witness the violation of a formula).
While BMC for trace properties reduces to a SAT-solving, BMC for hyperproperties naturally reduces to a QBF-solving. 
The BMC approach is implemented in \texttt{HyperQube}.
As usual for bounded methods, BMC for HyperLTL is incomplete. 
For example, it can never show that a system satisfies a hyperproperty of the form $\quant_1 \pi_1. \ldots \quant_n \pi_n. \ltlg \psi$ such as \ref{prop:GNIintro}.
In case the system terminates after a fixed number of steps (i.e., reaches a state whose only outgoing transition is to itself), pending obligations in that state can be evaluated precisely (as the entire future execution is fixed).
 Hsu et al.~\cite{HsuSB21} use this observation in their so-called \emph{halting} semantics (which comes in both an optimistic and a pessimistic flavor).
 While this improves the accuracy of BMC (for example, the halting semantics can verify GNI if all executions of a system terminates after $k$ steps and the unrolling bound is some $k' \geq k$), most reactive systems are inherently non-terminating.
 All of the examples considered in \Cref{tab:gniInstances} are non-terminating (as is usual for rwsctive programs), so the halting semantics offers no advantage -- \ref{prop:GNIintro} cannot be verified in any of those systems. 

\subsection*{Strengths and Weaknesses}\label{sec:strengthsAndWeakness}

An informal but illustrative overview of the relative strengths and weaknesses of each method is depicted in \Cref{fig:comp}.
As expected, all methods are sound (although for the manual strengthening the soundness argument cannot be checked automatically).
Different from BMC and ABV, SBV is limited to $\forall^*\exists^*$ formulas and cannot cope with arbitrary quantifier alternations. 
The biggest disadvantage of explicit-state MC is that it is not applicable to symbolic systems (such as circuits) and requires a prior conversion to an explicit system.
Nevertheless, in this paper, we show that -- at least for the existing set of benchmarks -- explicit MC performs on-par, or even outperforms, many symbolic methods by internally converting to an explicit-state representation.
SBV with strategy synthesis is, in theory, applicable to symbolic systems by solving a symbolic parity game, but this is currently not supported by any tool (hence the \halfcirc).
Manual strengthening and SBV with a given strategy are both very cheap (i.e., \texttt{NL} in the size of the system) but require significant manual effort that often requires domain knowledge. 
In contrast, SBV with strategy synthesis, BMC, and ABV and theoretically more expensive but ``push-button techniques'', i.e., they require no manual input by the user. 
Performance-wise BMC occupies a useful middle ground: while not as efficient as alternation-free methods, it reduces to the QBF problem that can, in theory, be solved faster than automaton-theoretic problems encountered in explicit-state model checking.
As we show in \Cref{sec:eval2}, this does not translate to practice and ABV can often compete with the performance of BMC.
Finally, the biggest motivation for the study of ABV (in the form of \AutoHyper) is its completeness; \AutoHyper{} is the first tool that can verify arbitrary HyperLTL properties in a sound-and-complete way.

\section{HyperLTL MC as Language Inclusion Benchmarks}\label{app:liBench}

When comparing \bait{} and \rabit{} in detail, the instances arising during HyperLTL MC seem to heavily favor \rabit{} (see \Cref{fig:largerInstances}).
This is in contrast to the observations made in \cite{DoveriGPR21}, who found that \bait{} outperforms \rabit{} on a majority of existing language inclusion benchmarks. 
Verification of $\forall^*\exists^*$ HyperLTL is, therefore, a natural candidate to extend the existing set of automaton inclusion benchmarks (an area where benchmarks are still sparse, see, e.g., \cite{DoveriGPR21}).
The resulting benchmarks seem to cover instances different from those found in existing benchmark sets (as indicated by the drastically different performance of \bait{} and \rabit{} on both benchmark families).

\section{Plan Synthesis as HyperLTL Model Checking}\label{app:planningViLI}

We note that path planning problem as studied by Wang et al.~\cite{0044NP20}, involve the \emph{synthesis} of a plan. 
If we wish to synthesize the shortest path, we can phrase the problem as a formula $\varphi = \exists \pi\ldot \forall \pi'. \psi$ wher $\psi$ states that $\pi$ encodes a path that reaches the goal and all other paths $\pi'$ take at least as long as $\pi$.
A concrete witness trace for $\pi$ is then an optimal solution to the planning problem.  
When checking $\varphi$ by reducing an SMT query (as done by Wang et al.~\cite{0044NP20}) or to a QBF problem (as done by Hsu et al.~\cite{HsuSB21}) we can easily extract a concrete solution from a satisfying model of the encoding (which is either $\exists^*\forall^*$ SMT or an $\exists^*\forall^*$ QBF formula).
Synthesizing a concrete path using the automaton-based approach implemented in \AutoHyper{} is less straightforward. 
However, we argue that we can actually use the approach to synthesize a concrete solution, provided the used language inclusion checker provides counterexamples (which is the case for most solver such as \rabit{}, \bait{}, and \forklift{}). 

In \AutoHyper{}, we check the negated $\neg \varphi = \forall \pi\ldot \exists \pi'. \neg \psi$ formula as this allows us to make use of language inclusion checkers.
As the model satisfies $\varphi$ (all models that have a shortest path will satisfy $\varphi$), it violates $\neg \varphi$, so by \Cref{prop:li}, the language inclusion that is checked does not hold. 
A trace $t$ that witnesses this non-inclusion during the model checking of $\neg \varphi$ property is thus contained in the system but has no witness trace; Or phrased equivalently, $t$ is a witness for the existential quantification in $\varphi$.
Language inclusion checkers that return counterexamples can thus be used for optimal path planning and (as shown in \Cref{tab:planning}) outperform the QBF-based approach.

\section{Details for \Cref{sec:eval3}}\label{app:effSizeABVSBV}

\begin{figure}[!t]
	\begin{subfigure}[t]{0.48\linewidth}
		\centering
		\includegraphics[width=\linewidth]{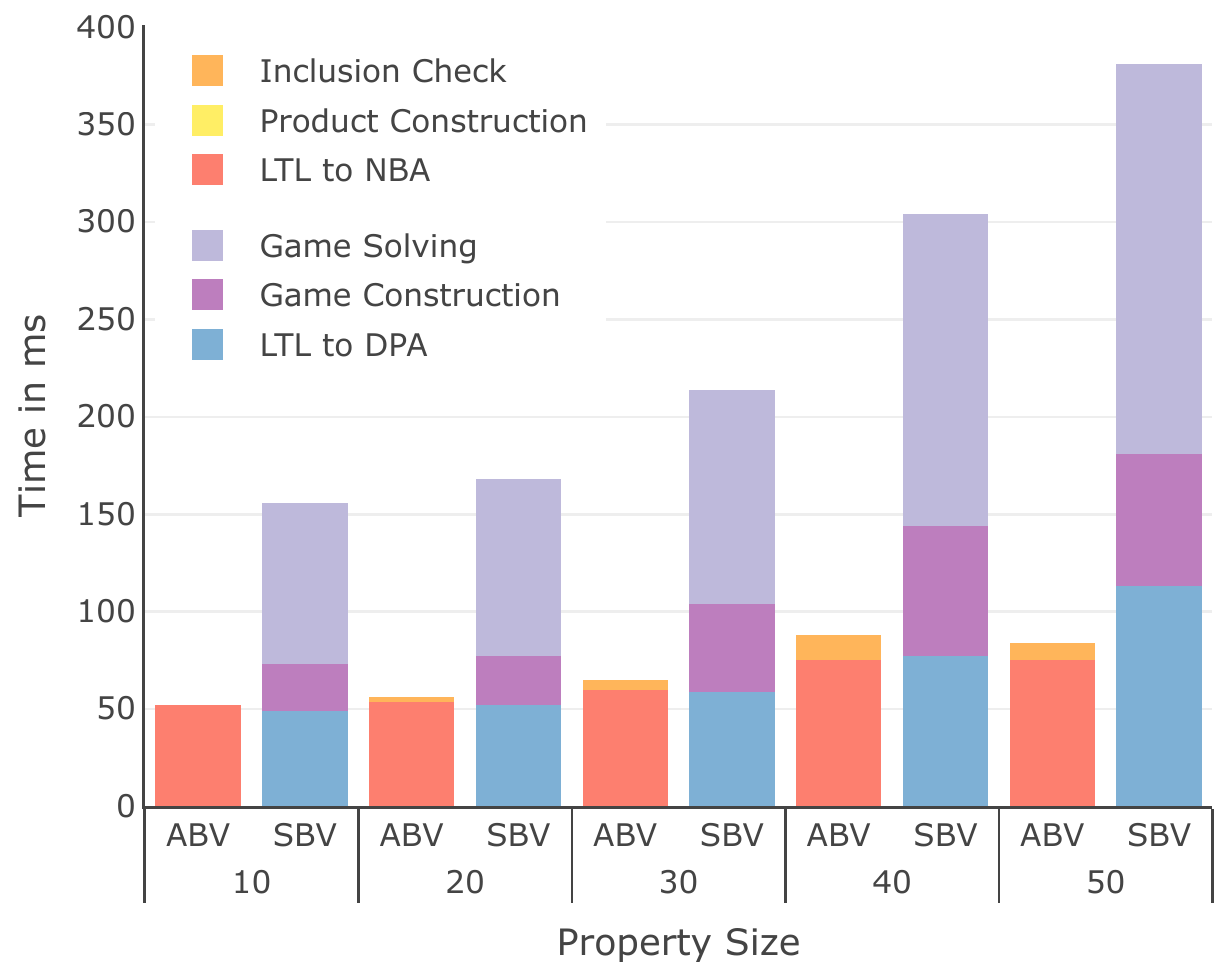}
		\vspace{-3mm}
		\subcaption{Running times of ABV and SBV with varying property size.  We fix the system size 30. }\label{fig:runtimesABVvsSBVPropSize}
	\end{subfigure}\hfill
	\begin{subfigure}[t]{0.48\linewidth}
		\centering
		\includegraphics[width=\linewidth]{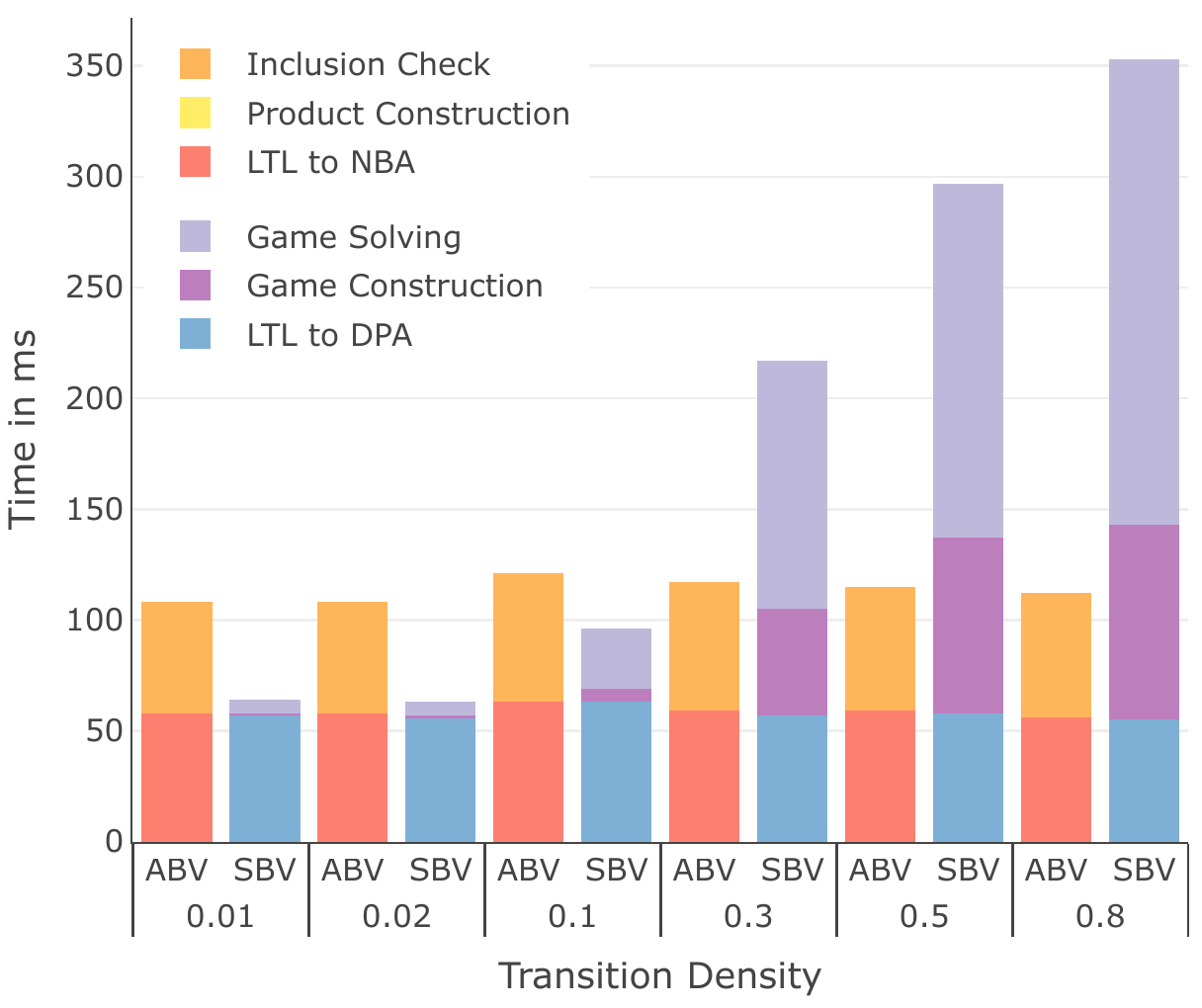}
		\vspace{-3mm}
		\subcaption{Running times of ABV and SBV with varying transition density. We fix the system size to 30, and the property size to 20.}\label{fig:runtimesABVvsSBVDensity}
	\end{subfigure}
	
	\caption{We compare ABV and SBV on instances with varying property size (in \Cref{fig:runtimesABVvsSBVPropSize}) and transition density (in \Cref{fig:runtimesABVvsSBVDensity}). For each size we generate 100 random $\forall^1\exists^1$ properties. 
		The timeout is set to 5s. 
		We take the average over the fastest $L$ instances, where $L$ is the minimum of the instances solved within a 5s timeout by both ABV and SBV.}
\end{figure}

\subsection{Efficiency of SBV -- Impact of Specification Size}\label{app:specSizeABVSBV}

Similar to \Cref{fig:runtimesABVvsSBVSystemSize},  \Cref{fig:runtimesABVvsSBVPropSize} compares the running time of ABV and SBV in terms of the property size. 
While these results match the theory, i.e., SBV scales worse than ABV in the size of the specification, the detailed breakdown of the total running time shows that the decrease in performance is not for the reason suggested by theory. 
While the limiting factor should be a much more expensive LTL-to-DPA translation (which is exponentially more expensive than a LTL-to-NBA translation involved in ABV) the actual blowup stems from the construction and solving time of the parity game.

\subsection{Efficiency of SBV -- Impact of Transition Density}\label{app:specSizeABVDensity}

We perform a similar experiment as in \Cref{fig:runtimesABVvsSBVSystemSize,fig:runtimesABVvsSBVPropSize} but do no vary the size of the transition system but only its density. 
\Cref{fig:runtimesABVvsSBVDensity} depicts the results.
We observe that with increasing density, SBV scales very poorly.
This confirms that the poor performance of SBV is due to the size of the game space, which, with increasing transition density, approaches quadratic size (in the size of the system). 
To see this, take a system with $n$ states where each state has a unique successor. The resulting parity game constructs the $2$-fold self-composition of the system, which has $\Theta(n)$ states. At the other extreme, if each state has a transition to all other states, the resulting game has size $\Theta(n^2)$ as all pairs of states occur in the product. 

\fi

\end{document}